\pdfoutput=1
\documentclass[pra,twocolumn,longbibliography,nofootinbib]{revtex4-1} 
\usepackage{amsmath,amssymb,amsthm,amscd,amsxtra,amsfonts,mathrsfs,graphicx,enumerate,bm,slashed,array,mathtools,enumitem}
\usepackage{xcolor}
\usepackage[all,color]{xy}
\setcitestyle{numbers}
\setlength\extrarowheight{5pt}
\input xy
\xyoption{all}
\newtheorem{Theorem}{Theorem}[section]
\newtheorem{Lemma}[Theorem]{Lemma}
\newtheorem{Proposition}[Theorem]{Proposition}

\newtheorem{Example}[Theorem]{Example}
\newtheorem{Remark}[Theorem]{Remark}
\newtheorem{Question}[Theorem]{Question}
\newtheorem{Definition}[Theorem]{Definition}

\newtheorem*{Theorem A}{Theorem A}


\begin{document}
 \author{Charlie Beil}
 \affiliation{Heilbronn Institute for Mathematical Research, Howard House, The University of Bristol, Bristol, BS8 1SN, United Kingdom.}
 \email{charlie.beil@bristol.ac.uk}
 \title[On the spacetime geometry of quantum nonlocality]{On the spacetime geometry of\\ quantum nonlocality}
 \keywords{Quantum nonlocality, entanglement, time, quantum foundations}

\begin{abstract}
We present a new geometry of spacetime where events may be positive dimensional.
This geometry is obtained by applying the identity of indiscernibles, which is a fundamental principle of quantum statistics, to time.
Quantum nonlocality arises as a natural consequence of this geometry. 
We also examine the ontology of the wavefunction in this framework.
In particular, we show how entanglement swapping in spacetime invalidates the preparation assumption of the PBR theorem.
\end{abstract}

\maketitle

\section{Introduction}

An elementary discrepancy between quantum theory and relativity is that quantum theory is inherently nonlocal, whereas spacetime has the structure of a manifold, and is thus local by construction. 
The discrepancy is resolved on the level of information, since the intrinsic randomness in the measurement of a quantum state prevents instantaneous signaling (by the no-communication theorem \cite[II.E]{PT}).
This resolution is satisfactory if information is considered to be fundamental \cite[III.C]{PT}.
However, if one considers geometry to be fundamental, then the discrepancy remains.

Here we pursue a possible resolution from the perspective that geometry is fundamental, with the aim that it may shed light on the nature of quantum gravity.\footnote{To say that geometry is fundamental, we do not mean that the block universe has an objective physical reality; see Remark \ref{block}.} 
Just as simultaneity has no universal meaning in special relativity, we propose that a `moment of time' has no universal meaning, and different observers will in general disagree about the `duration' of a single moment of time. 
In particular, even clocks in the same inertial frame may disagree.

The paper is organized as follows.
We first propose a new operational definition of time using the identity of indiscernibles: we postulate that time passes if and only if a system undergoes a transformation which is not local and invertible.
We then show that this postulate is compatible with the thermodynamic arrow of time in a generic example.
Furthermore, the postulate results in a spacetime with positive dimensional events, thus giving rise to Bell nonlocality without requiring retrocausality.
Finally, we examine the ontology of the wavefunction in this framework.
In particular, we show that if spacetime events are topologically closed, then the wavefunction is epistemic.
Moreover, we find that the preparation assumption of the PBR theorem does not hold using the worldlines of 4-photon entanglement swapping.
 
We remark that our work pertains to Minkowski spacetime; we speculate that a mathematical formulation of nonlocal curved spacetime may be a hybrid of differential geometry and nonnoetherian algebraic geometry \cite{B}, the latter of which was introduced in \cite{B1} to study a class of non-superconformal quiver gauge theories in string theory.

Throughout, $\left| 0 \right\rangle$ and $\left| 1 \right\rangle$ form an orthonormal basis for the qubit Hilbert space, and $\left| \pm \right\rangle := \frac{1}{\sqrt{2}} \left( \left| 0 \right\rangle \pm \left| 1 \right\rangle \right)$.

\section{Time and the identity of indiscernibles} \label{first section}

To define a notion of spacetime that encompasses quantum nonlocality, we must first define what a clock is.
In this section we introduce an operational definition of time using the identity of indiscernibles.

The identity of indiscernibles is a principle, due to Leibniz, which states that two distinct objects cannot be identical.
More precisely, for each $x$ and $y$, if $x$ and $y$ have all the same properties, $\forall P (Px \Leftrightarrow Py)$, then $x = y$.
This principle is fundamental in quantum statistics (both Bose-Einstein and Fermi-Dirac). 
Applying the identity of indiscernibles to time, we obtain:
\begin{center}
\textit{If a system does not change, then the system does not experience time.}
\end{center}
We call this assumption, together with its converse, the `weak postulate of time':
\begin{center}
\textit{A system experiences time if and only if the system undergoes a non-trivial transformation.}
\end{center}

We assume that the experience of time is transitive: If a system $\mathcal{H}_1$ experiences time under a transformation $T_1: \mathcal{H}_1 \to \mathcal{H}_2$, and $\mathcal{H}_2$ experiences time under a transformation $T_2: \mathcal{H}_2 \to \mathcal{H}_3$, then $\mathcal{H}_1$ experiences time under the composition $T_2T_1: \mathcal{H}_1 \to \mathcal{H}_3$.

\begin{Lemma} \label{non-trivial}
If the weak postulate of time holds, then non-trivial invertible transformations cannot exist.
\end{Lemma}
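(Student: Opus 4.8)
The plan is to argue by contradiction, using both directions of the biconditional in the weak postulate together with the transitivity assumption. Suppose, toward a contradiction, that a non-trivial invertible transformation $T : \mathcal{H}_1 \to \mathcal{H}_2$ exists. The strategy is to build from $T$ a single transformation that the postulate forces the system both to experience and not to experience as time.

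First I would apply the forward (``only if'') direction of the postulate to $T$: since $T$ is non-trivial, the system experiences time under $T$. Next I would verify that the inverse $T^{-1} : \mathcal{H}_2 \to \mathcal{H}_1$ is itself non-trivial — for if $T^{-1}$ were the identity, then $T$ would be as well, contradicting the non-triviality of $T$ — so the postulate likewise yields that the system experiences time under $T^{-1}$. I would then invoke the transitivity hypothesis with $T_1 = T$ and $T_2 = T^{-1}$, so that $\mathcal{H}_3 = \mathcal{H}_1$ and the composition is $T_2 T_1 = T^{-1} T = \mathrm{id}_{\mathcal{H}_1}$; transitivity concludes that the system experiences time under this round trip. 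Finally, since $T^{-1} T = \mathrm{id}_{\mathcal{H}_1}$ leaves the system unchanged, the reverse (``if'') direction of the postulate — equivalently the identity-of-indiscernibles statement that an unchanged system does not experience time — forces the system \emph{not} to experience time under it. This is the contradiction, so no non-trivial invertible $T$ can exist.

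The delicate points here are conceptual rather than computational. One must fix the meaning of \emph{trivial}: it should denote the identity (the system does not change), so that $T^{-1} T = \mathrm{id}$ counts as trivial even though neither $T$ nor $T^{-1}$ does individually. One must also confirm that the transitivity assumption legitimately applies to the ordered pair $(T, T^{-1})$, whose composition closes a loop back to $\mathcal{H}_1$ rather than continuing to a genuinely new space. Granting these readings, the argument is short; the main obstacle is recognizing that the round trip $T^{-1} T$ is precisely the object to which transitivity and the ``if'' direction of the postulate should be jointly applied.
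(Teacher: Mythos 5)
Your proof is correct and takes essentially the same route as the paper: both argue by contradiction, apply the forward direction of the postulate to the two non-trivial maps, apply the converse direction to the identity composition, and let transitivity produce the clash. Your only addition is the explicit check that $T^{-1}$ is non-trivial, a small point the paper's proof leaves implicit by directly positing a sequence of non-trivial transformations composing to the identity.
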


\begin{proof}
Suppose to the contrary that a system $\mathcal{H}_1$ undergoes a sequence of non-trivial transformations 
$$\mathcal{H}_1 \stackrel{T_1}{\longrightarrow} \mathcal{H}_2 \stackrel{T_2}{\longrightarrow} \mathcal{H}_1$$
such that the composition $T_2T_1$ is the identity on $\mathcal{H}_1$.
Then the system will experience time under $T_1$ and $T_2$, but not under their composition, in contradiction to transitivity.
\end{proof}

Quantum systems evolve by unitary transformations.
Thus, in view of Lemma \ref{non-trivial}, the weak postulate must be modified.

\begin{Definition} \rm{
Let $T: \mathcal{H} \to \mathcal{H}$ be a linear transformation.
We say $T$ is \textit{local} with respect to a tensor product decomposition $\mathcal{H} \cong \bigotimes_{i \in I} \mathcal{H}_i$ if $T$ decomposes
$$T = \bigotimes_{i\in I} T_i, \ \ \text{ with } \ T_i: \mathcal{H}_i \to \mathcal{H}_i.$$
We say $T$ is \textit{local invertible (LI)} if $T$ is local and invertible; otherwise we say $T$ is \textit{non-LI}.
}\end{Definition}

We propose the following `strong postulate of time':
\begin{center}
\textit{A system experiences time if and only if the system undergoes a non-LI transformation.} 
\end{center}

From this postulate we obtain an abstract definition of a clock.

\begin{Definition} \label{clock} \rm{
A clock is a system that undergoes a sequence of non-LI transformations 
$$T_i: \mathcal{H}_i \to \mathcal{H}_{i+1}, \ \ \ i \in \mathbb{Z}.$$ 
}\end{Definition}

We note that the strong postulate of time (without the local assumption) is morally similar to a form of spontaneous symmetry breaking.
Indeed, if a transformation $T: \mathcal{H} \to \mathcal{H}$ of Hilbert spaces is not invertible, then the rank of $T$ is less than the dimension of $\mathcal{H}$,
$$\operatorname{dim}T(\mathcal{H}) < \operatorname{dim}\mathcal{H}.$$
Thus the rank of the general linear group of $T(\mathcal{H})$ is less than the rank of the general linear group of $\mathcal{H}$,
$$\operatorname{rank}\left( \text{GL}\left( T(\mathcal{H}) \right) \right) < \operatorname{rank}\left(\operatorname{GL}\left( \mathcal{H} \right) \right).$$
In particular, $T(\mathcal{H})$ has less symmetry than $\mathcal{H}$.
We may therefore view $T$ as spontaneously breaking the symmetries of $\mathcal{H}$, and the property that emerges under $T$ is a new moment of time.

\section{The Hilbert space of a single photon in an otherwise empty universe} \label{Hilbert space section}

By the strong postulate of time, a system $\mathcal{H}$ does not experience time if $\mathcal{H}$ undergoes an LI transformation $T: \mathcal{H} \to \mathcal{H}$.
One may ask, if $T$ is non-trivial, then how can $\mathcal{H}$ not experience time under $T$, since such a transformation is still an honest change of $\mathcal{H}$? 
In this section we address this question. 

The Hilbert space of a state is generated, as a vector space, by its physical environment.\footnote{Here we do not use density matrices, as density matrices describe either an ensemble of states, or repeated runs of identically prepared states, and we are interested in what happens in a single quantum event.}
Specifically, the Hilbert space $\mathcal{H}_E$ of a quantum state $\psi$ in an environment $E$ is generated by the quantum states $\phi_1, \ldots, \phi_n$ for which the probability $c_i$ of transition $\psi \mapsto \phi_i$ is nonzero in $E$.
In particular,
\begin{equation} \label{ci}
\mathcal{H}_E = \operatorname{span}\left\{ \left| \phi_1 \right\rangle, \ldots, \left| \phi_n \right\rangle \right\} \ \ \text{ and } \ \ \left\langle \phi_i | \psi \right\rangle^2 = c_i.
\end{equation}
We call the set of vectors $\left\{ \left| \phi_i \right\rangle \right\}$ the \textit{physical spanning set} for $\psi$.

`Non-separable' is usually taken to be synonymous with `entangled', and so we introduce the following terminology for clarity. 

\begin{Definition} \rm{
We say a pure state $\left| \psi \right\rangle \in \mathcal{H}_E$ is \textit{indivisible} if it is not separable with respect to its physical spanning set.
}\end{Definition}

An example of a non-entangled indivisible state is a single photon or qubit. 
We propose the following:
\begin{center}
\textit{An indivisible state exchanges information with its environment if and only if the environment acts on the state by a non-LI transformation.}
\end{center}

By `measurement' we mean an exchange of information between a state and its environment. 
This may be realized by either wavefunction collapse or non-unitary entanglement.
Wavefunction collapse itself is not necessarily a non-invertible transformation, as the following example demonstrates.

\begin{Example} \label{first example} \rm{
Consider a photon with initial polarization $\left| + \right\rangle$ encountering a polarizer with orientation $\left| 0 \right\rangle$. 
The photon may either be absorbed by the polarizer, in which case its wavefunction collapses onto $\left| 1 \right\rangle$; or the photon may pass through, in which case its wavefunction collapses onto $\left| 0 \right\rangle$.
In the latter case, the original polarization of the photon may be restored by passing the photon through a Hadarmard gate $H = \left| + \right\rangle \left\langle 0 \right| + \left| - \right\rangle \left\langle 1 \right|$.
In particular, the collapse $\left| + \right\rangle \mapsto \left| 0 \right\rangle$ is an invertible transformation.
}\end{Example}

The Hilbert space $\mathcal{H}_E$ of $\psi$ depends on the choice of environment $E$.
It is often the case (though perhaps not always) that the dimension of $\mathcal{H}_E$ increases as the environment $E$ of $\psi$ is enlarged,
\begin{equation} \label{dim HE}
\operatorname{dim}\mathcal{H}_E \leq \operatorname{dim}\mathcal{H}_{E'} \ \ \text{ whenever } \ \ E \subseteq E'.
\end{equation}

\begin{Example} \rm{
Consider a photon encountering a sequence of $n$ polarizers $P_1, \ldots, P_n$, which we may take to be electrons oscillating at the respective angles $\theta_1, \ldots, \theta_n$.
At each polarizer $P_i$, the photon is either absorbed, in which case its wavefunction collapses onto $\left| \theta_i \right\rangle \in \mathcal{H}_i \cong \mathbb{C}^2$, 
or passes through, in which cases its wavefunction collapses onto $\left| \frac{\pi}{2} - \theta_i \right\rangle \in \mathcal{H}_i$.
For each $1 \leq m < n$, denote by $E(m)$ the environment consisting of the first $m$ polarizers $P_1, \ldots, P_m$.
The Hilbert space $\mathcal{H}_{E(m)}$ of $\psi$ is then
$$\mathcal{H}_{E(m)} = \operatorname{span}\left\{ \left| 1,\theta_1 \right\rangle, \ldots, \left| m,\theta_m \right\rangle, \left| m, \pi/2- \theta_m \right\rangle \right\}.$$
Furthermore, $\left\langle i,\theta_i | j,\theta_j \right\rangle = 0$ for $i \not = j$ since the probability of transition $\left| i,\theta_i \right\rangle \mapsto \left| j,\theta_j \right\rangle$ is zero.
Therefore for a generic choice of $\theta_i$,
$$\mathcal{H}_{E(m)} \cong \mathbb{C}^{m+1}.$$
In particular, (\ref{dim HE}) holds.
}\end{Example}

We may extrapolate the environmental dependence of $\mathcal{H}$ down to $\psi$ itself, where the environment of $\psi$ consists of only $\psi$.
Indeed, by the definition of $\mathcal{H}_E$ given in (\ref{ci}) and the fact the probability of transition $\psi \mapsto \psi$ is 1, we propose that \textit{the Hilbert space of an indivisible state $\psi$ with respect to the environment $E = \left\{ \psi \right\}$ is one-dimensional,}
\begin{equation} \label{H psi}
\mathcal{H}_{\psi} = \operatorname{span}\left\{ \left| \psi \right\rangle \right\} \cong \mathbb{C}.
\end{equation}
This may be interpreted to mean that all the properties of an indivisible state are \textit{relational}; without reference to the exterior universe, a quantum state has no intrinsic properties.

\begin{Remark} \rm{
Again consider a photon undergoing an invertible transformation as in Example \ref{first example}.
The idea behind (\ref{H psi}) is that a photon (as well as any indivisible state) \textit{does not detect its environment when there is no exchange of information}.  
Thus from the photons perspective, it lives completely alone in an otherwise empty universe.
Changing the photons polarization by an invertible transformation is then similar to rotating the photon together with its entire universe, and thus goes unnoticed by the photon.
An observer outside the universe (or an experimenter in an optics lab) would infer this change, but the photon would not. 
}\end{Remark}

In the following proposition we resolve the question raised at the beginning of this section.

\begin{Proposition} \label{one-dim}
The weak postulate of time is equivalent to the strong postulate of time for indivisible states.
\end{Proposition}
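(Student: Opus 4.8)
The plan is to reduce both postulates to a single condition on an indivisible state by invoking the one-dimensionality of its relational Hilbert space. By (\ref{H psi}), an indivisible state $\psi$ has Hilbert space $\mathcal{H}_{\psi} \cong \mathbb{C}$ with respect to its own environment $E = \{\psi\}$. Since each postulate has the form ``the system experiences time if and only if [condition],'' the two are equivalent precisely when their conditions agree, so the whole argument reduces to showing that on $\mathcal{H}_{\psi}$ a transformation is non-trivial if and only if it is non-LI.

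First I would note that a one-dimensional Hilbert space admits only the trivial tensor product decomposition, namely itself as a single factor. Consequently every linear map $T : \mathcal{H}_{\psi} \to \mathcal{H}_{\psi}$ is automatically local, and the LI/non-LI dichotomy collapses onto invertible/non-invertible. As any endomorphism of $\mathbb{C}$ is multiplication by a scalar $\lambda$, the map $T = \lambda\,\mathrm{id}$ is non-LI exactly when $\lambda = 0$, i.e. when $T$ is the zero map. Next I would identify the physical content of ``non-trivial'': for $\lambda \neq 0$ the map $T = \lambda\,\mathrm{id}$ fixes the ray $\operatorname{span}\{ \left| \psi \right\rangle \}$, altering at most phase and normalization, and so from the state's own perspective it leaves $\psi$ unchanged and is trivial; the zero map $\lambda = 0$ annihilates the state and is the unique non-trivial transformation. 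Hence the non-trivial transformations of $\mathcal{H}_{\psi}$ coincide exactly with its non-LI transformations, and the two postulates become identical for indivisible states.

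The main obstacle I anticipate is interpretational rather than computational: making precise the sense in which multiplication by a nonzero scalar counts as a \emph{trivial} transformation. The resolution is to read triviality relationally. A quantum state is a ray, and by (\ref{H psi}) an indivisible state sees only its own one-dimensional universe, so any invertible (hence scalar) transformation is a mere relabelling of that universe and not a genuine change, exactly as described in the Remark preceding the proposition, where an invertible transformation goes unnoticed by the photon. Once triviality is fixed in this way, the equivalence of the weak and strong postulates for indivisible states follows immediately from the classification of endomorphisms of $\mathbb{C}$.
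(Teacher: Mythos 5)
Your proof is correct and rests on the same key fact as the paper's --- that $\mathcal{H}_{\psi} \cong \mathbb{C}$ admits, up to a global phase, only the identity and the zero map --- but it is packaged genuinely differently. You work entirely intrinsically on $\mathcal{H}_{\psi}$: you note that a one-dimensional space has no nontrivial tensor decomposition, so every endomorphism is automatically local, the LI/non-LI dichotomy collapses to invertible/non-invertible (nonzero scalar versus zero map), and the relational reading of triviality then forces the two postulates' conditions to coincide. The paper instead works extrinsically: it takes a transformation $T : \mathcal{H}_E \to \mathcal{H}_E$ of the \emph{composite} system --- which is where a polarizer or Hadamard gate is actually defined, and where the LI/non-LI distinction has real content --- and shows via a commutative diagram of projections $\pi_{\left| \psi \right\rangle}$, $\pi_{T\left| \psi \right\rangle}$ that an invertible $T$ descends to multiplication by $1$ on $\mathcal{H}_{\psi}$ (hence appears trivial to $\psi$), while a $T$ with $\left| \psi \right\rangle$ in its kernel descends to the zero map (hence is non-trivial, and time passes under both postulates). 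That descent step is what answers the motivating question of the section --- how an honest change of $\mathcal{H}_E$ can go unnoticed by $\psi$ --- and your opening ``reduction'' quietly absorbs it by assuming the transformation is already given as a map on $\mathcal{H}_{\psi}$. Your route buys a cleaner classification, and makes explicit a point the paper leaves implicit (locality is vacuous in dimension one); the paper's route buys the bridge between the lab-frame description, where non-LI is assessed, and the state-frame description, where triviality is assessed. To fully match the paper's intent, you should add that bridge: an LI transformation of $\mathcal{H}_E$ not annihilating $\left| \psi \right\rangle$ induces a nonzero scalar on $\mathcal{H}_{\psi}$, whereas absorption induces the zero map.
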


\begin{proof} 
Since the Hilbert space $\mathcal{H}_{\psi} \cong \mathbb{C}$ of an indivisible state $\psi$ is one-dimensional, there are only two morphisms $\mathcal{H}_{\psi} \to \mathcal{H}_{\psi}$ up to a global phase factor: the identity map and the zero map.\footnote{In the language of quivers, we are considering representation isoclasses of the quiver with two vertices and one arrow between them, of dimension vector $(1,1)$, rather than one-dimensional representation isoclasses of the quiver with one vertex and one loop.} 

Consider an invertible transformation $T: \mathcal{H}_E \to \mathcal{H}_E$ of a composite system of indivisible states.
Then $T$ extends to the commutative diagram,
\begin{equation} \label{diagram}
\xymatrix{ \mathcal{H}_E \ar[r]^T \ar[d]_{\pi_{\left| \psi \right\rangle}} & \mathcal{H}_E \ar[d]^{\pi_{T\left| \psi \right\rangle}}
\\ \mathcal{H}_{\psi} \ar[r]^{\cdot 1} & \mathcal{H}_{\psi}}
\end{equation}
where the maps $\pi_{\left| \psi \right\rangle}$ and $\pi_{T\left| \psi \right\rangle}$ are the projections from $\mathcal{H}_E$ to the one-dimensional subspaces spanned by $\left| \psi \right\rangle$ and $T \left| \psi \right\rangle$ respectively.
\textit{Therefore $T$ appears to be trivial to the indivisible state $\psi$.}

On the other hand, suppose $\left| \psi \right\rangle$ is in the kernel of $T$ (for example, if the photon is absorbed by the polarizer in Example \ref{first example}).
Then both the composite system $E$ and the subsystem $\psi$ experience time by the weak postulate of time.
Indeed, $T$ extends to a commutative diagram as in (\ref{diagram}),
$$\xymatrix{ \mathcal{H}_E \ar[r]^{T} \ar[d]_{\pi_{\left| \psi \right\rangle}} & \mathcal{H}_E \ar[d]^{\pi_{T\left| \psi \right\rangle} = 0}
\\ \mathcal{H}_{\psi} \ar[r]^{\cdot 0} & 0}$$
\end{proof}

\section{Thermodynamic entropy and the strong postulate of time} \label{thermodynamic}

In this section we consider a simple thermodynamic system, and ask whether the strong postulate of time implies the experience of time as entropy increases. 

Consider a finite set of distinct particles $\mathcal{P}$ moving in Minkowski space $X$, and a finite set of points $\mathcal{Q} \subset X$, called sites, in a fixed inertial frame.
Denote by $\left| \mathcal{P} \right|$ the number of particles, and by $\left| \mathcal{Q} \right|$ the number of sites.
Set 
$$\mathcal{P}' := \mathcal{P} \cup \left\{ \emptyset \right\}.$$

Let $t \in \mathbb{R}$ be the time parameter in the inertial frame of $\mathcal{Q}$.
Suppose at times $t = 0$ and $t = 1$ the particles lie on the sites of $\mathcal{Q}$.
Then at these times, the location of each particle may be specified by one of the maps
$$\sigma_t: \mathcal{P} \longrightarrow \mathcal{Q} \ \ \ \text{ and } \ \ \ \tau_t: \mathcal{Q} \longrightarrow \mathcal{P}',$$
where $\sigma_t$ specifies the site of each particle, and $\tau_t$ specifies the particle at each site. 
Note that for each particle $p \in \mathcal{P}$,
$$\tau_t \sigma_t(p) = p.$$
Furthermore, if no particle is located at site $q \in \mathcal{Q}$ at time $t$, then $\tau_t(q) = \emptyset$.

Denote by $\mathbb{C}\mathcal{P}$ and $\mathbb{C}\mathcal{Q}$ the $\mathbb{C}$-vector spaces freely generated by $\mathcal{P}$ and $\mathcal{Q}$, respectively.
The transformation of the particle system from $t = 0$ to $t = 1$ may be represented by two $\mathbb{C}$-linear maps.
In the first case, the particles are fixed and the sites are transformed:
$$\begin{array}{crcl}
S: & \mathbb{C}\mathcal{Q}^{\otimes \left| \mathcal{P} \right|} & \longrightarrow & \mathbb{C}\mathcal{Q}^{\otimes \left| \mathcal{P} \right|} \\
& \bigotimes_{p \in \mathcal{P}} \left| \sigma_0(p) \right\rangle & \mapsto & \bigotimes_{p \in \mathcal{P}} \left| \sigma_1(p) \right\rangle
\end{array}$$
In the second case, the sites are fixed and the particles are transformed:
$$\begin{array}{crcl}
T: & \mathbb{C}\mathcal{P}'^{\otimes \left| \mathcal{Q} \right|} & \longrightarrow & \mathbb{C}\mathcal{P}'^{\otimes \left| \mathcal{Q} \right|} \\
& \bigotimes_{q \in \mathcal{Q}} \left| \tau_0(q) \right\rangle & \mapsto & \bigotimes_{q \in \mathcal{Q}} \left| \tau_1(q) \right\rangle
\end{array}$$

Suppose that at least one particle changes sites.
Then the \textit{system of particles} undergoes a local transformation under $S$, but not under $T$.
In particular, with respect to the particle system, $T$ is a non-LI transformation.
In contrast, if distinct particles are on distinct sites, then $S$ is an LI transformation. 

Furthermore, in the $S$ representation a clock may be assigned to each particle, and in the $T$ representation a clock may be assigned to each site.
By relativity, clocks assigned to different particles will generically disagree since the particles are in relative motion.
Thus there is no global notion of time in the $S$ representation. 
In contrast, clocks assigned to different sites will agree since the sites belong to a single inertial frame. 
Therefore there is a global notion of time in the $T$ representation. 

Consequently, to determine if time has passed for the system of particles, the $T$ representation must be used.
Since $T$ is a non-LI transformation, the system of particles experiences time by the strong postulate of time.
The resulting implication 
\begin{center}
\textit{entropy increases $\Longrightarrow$ system experiences time} 
\end{center}
is thus a consequence of the strong postulate of time together with special relativity.\footnote{Of course, the converse implication is the content of the second law of thermodynamics.} 

\section{Positive dimensional spacetime events from quantum clocks} \label{positive dimensional section}

To distinguish between classical (relativistic) spacetime and the spacetime we will introduce, we we will always use the adjective `inferred' when referring to classical spacetime; for example, \textit{inferred spacetime}, \textit{inferred event}, \textit{inferred proper time}, \textit{inferred metric}, etc.
Recall that an inferred event is a zero-dimensional point in inferred spacetime.

\begin{Definition} \rm{
We define spacetime as follows.
\begin{enumerate} [label=(\alph*)]
\item The \textit{support} of an indivisible state $\psi$ is the locus of points in inferred spacetime where it is possible in principle to measure $\psi$. 
 \item A \textit{spacetime event} is the support of an indivisible state.
\end{enumerate}
}\end{Definition}

\begin{Lemma}
The definition of a spacetime event is well-defined.
\end{Lemma}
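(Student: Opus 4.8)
The plan is to verify that the construction sending an indivisible state $\psi$ to its support yields a single, unambiguous subset of inferred spacetime, so that the phrase ``the support of an indivisible state'' names a well-defined set. The definition implicitly involves two auxiliary choices that I must show are immaterial: the inertial frame used to describe inferred spacetime, and the particular apparatus or environment $E$ through which a ``measurement'' of $\psi$ is realized. I would therefore prove that the support depends on $\psi$ only, and in fact only through its ray, since a global phase is physically unobservable. Recall that by the strong postulate of time, and by the principle stated just before Example \ref{first example}, ``measuring $\psi$'' means an exchange of information, i.e.\ the environment acts on $\psi$ by a non-LI transformation.

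First I would dispose of the apparatus dependence. The support is defined as the locus where it is \emph{possible in principle} to measure $\psi$, and this clause is existentially quantified over all admissible environments rather than referring to any fixed one. Thus a point $x$ lies in the support precisely when \emph{some} environment acts on $\psi$ at $x$ by a non-LI transformation. Because membership in the support is an existential statement over all environments, it is insensitive to the choice of apparatus, and the environment-dependence of $\mathcal{H}_E$ recorded in (\ref{dim HE}) therefore does not enter the underlying point-set.

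Next I would establish frame independence. The points of inferred spacetime are the events of Minkowski space, which are frame-independent geometric objects; only their coordinate labels and the durations separating them vary between frames. The assertion ``$\psi$ can interact with an environment at $x$ via a non-LI transformation'' refers to the occurrence of a physical interaction localized at the point $x$, and the mere existence of such an interaction is a Lorentz-invariant fact. Hence the support, viewed as a subset of Minkowski space, is the same set in every frame. Combining this with the apparatus independence above and the phase independence noted at the outset shows that the support is determined by $\psi$ alone, which is exactly the required well-definedness.

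The main obstacle I anticipate is reconciling this frame independence with the central thesis of the paper, namely that a ``moment of time'' has no universal meaning and that even co-moving clocks may disagree. The delicate point is to separate the frame-dependence of the \emph{temporal extent} of an event, which the framework explicitly permits, from the frame-independence of the event as a \emph{set of inferred-spacetime points}, which is what the Lemma demands. I would argue these are compatible: the relativity of simultaneity and of duration alters how the support is sliced into moments but neither adds nor removes points from it, so the locus itself is invariant even though its proper-time description is not.
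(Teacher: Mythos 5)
Your proof addresses a different set of concerns than the one the Lemma is actually guarding against, and it leaves the genuine failure mode untouched. A spacetime event is defined as the support of an \emph{indivisible} state, and by the strong postulate of time the state keeps tracing out the \emph{same} event as long as it undergoes only LI transformations --- no time passes for it, so no new event can begin. For this to be consistent, indivisibility itself must be an invariant of LI evolution. If some LI transformation could carry an indivisible state $\psi$ to a separable state $\psi_1 \otimes \psi_2$, then the very same locus would have to be counted both as a single event (the support of the indivisible $\psi$) and as a union of distinct events (the supports of the factors $\psi_i$), even though no non-LI interaction ever occurred; the assignment of events to states would then be inconsistent. The paper's proof consists precisely of closing this gap: a local unitary $U_1 \otimes U_2 \in \operatorname{GL}(\mathcal{H}_1 \otimes \mathcal{H}_2)$ preserves separability and non-separability, so an indivisible state cannot be made separable by an LI transformation (whereas a \emph{global} unitary could do so).

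Your three invariances --- apparatus, frame, and phase independence --- are reasonable sanity checks, and your observation that the relativity of duration changes only how the support is sliced into moments, not its underlying point-set, is in the spirit of the paper. But none of them would detect the failure above: all three would still hold in a hypothetical theory in which LI transformations could disentangle states, and in such a theory the definition of a spacetime event would nonetheless be ill-defined. So the proposal misses the key idea and does not establish the Lemma.
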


\begin{proof}
An indivisible state cannot be transformed into a separable state by an LI transformation since a local unitary transformation $U_1 \otimes U_2 \in \operatorname{GL}\left( \mathcal{H}_1 \otimes \mathcal{H}_2 \right)$ preserves the separability of a state in $\mathcal{H}_1 \otimes \mathcal{H}_2$.
(Of course, any entangled state can be rotated to a separable state by a \textit{global} unitary transformation.)
\end{proof}

\begin{Remark} \rm{
The time and space parameters in the Schr\"odinger, Klein-Gordon, and Dirac equations are all parameters in inferred spacetime.
Consequently, these propagation equations are actually part classical, part quantum, and thus do not give a fully quantum description.
}\end{Remark}

\begin{Remark} \label{block} \rm{
Our definition of spacetime does not imply the reality of a timeless block universe (and thus the absence of free will), just as it is not implied by inferred spacetime. 
Indeed, in both theories spacetime may be regarded as epistemic, that is, as simply a mathematical construction. 
}\end{Remark}

\begin{Question} \rm{
Can spacetime events intersect nontrivially in inferred spacetime?
This question will be considered (though not answered) in Proposition \ref{overlap} below.
}\end{Question}

Time may emerge for a composite system whose constituent parts do not experience time.
Indeed, consider a system of indivisible states, such as the system of particles considered in Section \ref{thermodynamic}.
If the system undergoes a non-LI transformation, then the system as a whole experiences time by the strong postulate of time, whereas the individual states do not.
We conclude that there are two kinds of time that arise from the strong postulate of time:
\begin{enumerate}
 \item \textit{Fundamental time}, which results from distinct spacetime events along an inferred worldline.
 \item \textit{Emergent time}, which results from a non-LI transformation of a composite system.
\end{enumerate}

\begin{Example} \rm{
Consider again Example \ref{first example}, where a photon passes through a polarizer and thus undergoes an LI transformation.
Suppose the polarizer is an electron oscillating orthogonal to the direction of polarization.
The composite system consisting of both the photon and the electron experiences time by the argument given in Section \ref{thermodynamic}.
However, the photon and electron individually do not experience time. 
The time that the composite system experiences is therefore emergent. 
}\end{Example}

\section{Spacetime support of entangled states} \label{entanglement section}

\begin{figure}
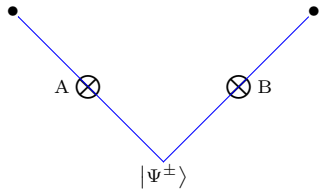

$$\xy
(0,-12)*{\text{\scriptsize{$\left|\Psi^{\pm} \right\rangle$}}}="";
(0,-10)*{}="1";(-10,0)*{\bigotimes}="2";(10,0)*{\bigotimes}="3";
(-13.5,0)*{\text{\scriptsize{A}}}="";(13.5,0)*{\text{\scriptsize{B}}}="";
(-20,10)*{\bullet}="4";(20,10)*{\bullet}="5";
{\ar@{-}@[blue]"1";"4"};{\ar@{-}@[blue]"1";"5"};
\endxy$$
\caption{The spacetime diagram for the Bell states, in the center-of-mass rest frame.
The horizontal and vertical axes are inferred space and time coordinates, respectively.}
\label{Bell}
\end{figure}

In this section we consider spacetime events of maximally entangled qubits with well-defined inferred positions (e.g., particles with well-defined inferred worldlines).
Recall that maximal entanglement is monogamous, that is, if two qubits are maximally entangled, then they cannot be entangled with a third qubit \cite{CKW}.
We leave the more difficult case of non-maximal entanglement for future work.

\subsection{Bell states}

The four maximally entangled Bell states are
$$\left| \Phi^{\pm} \right\rangle := \frac{1}{\sqrt{2}} \left( \left| 0 0 \right\rangle \pm \left| 1 1 \right\rangle \right), \ \ \ \ \left| \Psi^{\pm} \right\rangle := \frac{1}{\sqrt{2}} \left( \left| 0 1 \right\rangle \pm \left| 1 0 \right\rangle \right).$$
The spacetime diagram of a Bell state is given in Figure \ref{Bell}, where the two particles are flying away from each other. 
The state is supported on the union of the inferred wordlines of the two particles, with boundaries at the inferred events where the particles interact non-invertibly with their respective environments. 
This V-shaped support is then a single event in spacetime by the strong postulate of time.

Furthermore, if one of the two particles passes through a polarizer, then the transformation on the state is LI.
Therefore the original spacetime event continues to be traced out by the two particles, thereby allowing Bell's inequality to be violated.

\begin{figure}
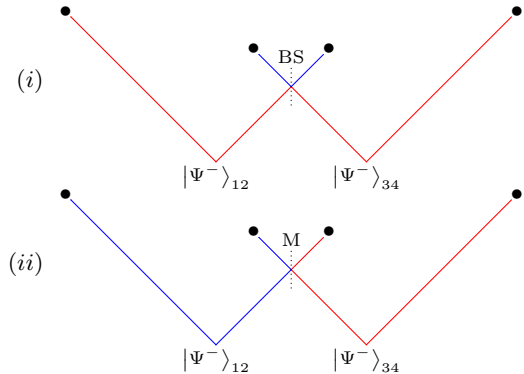

$$\begin{array}{rc}
(i) & \xy
(-10,-12)*{\text{\scriptsize{$\left|\Psi^- \right\rangle_{12}$}}}="";
(10,-12)*{\text{\scriptsize{$\left|\Psi^- \right\rangle_{34}$}}}="";
(0,4)*{\text{\scriptsize{BS}}}="";
(-10,-10)*{}="1";
(10,-10)*{}="2";
(-20,0)*{}="3";
(0,0)*{}="4";
(20,0)*{}="5";
(-30,10)*{\bullet}="6";
(-5,5)*{\bullet}="7";(5,5)*{\bullet}="8";
(30,10)*{\bullet}="9";
(0,-2.5)*{}="11";(0,2.5)*{}="10";
{\ar@{-}@[red]"1";"6"};{\ar@{-}@[red]"1";"4"};{\ar@{-}@[red]"2";"4"};{\ar@{-}@[red]"2";"9"};
{\ar@{-}@[blue]"4";"7"};{\ar@{-}@[blue]"4";"8"};{\ar@{..}"11";"10"};
\endxy
\\
(ii) & 
\xy
(-10,-12)*{\text{\scriptsize{$\left|\Psi^- \right\rangle_{12}$}}}="";
(10,-12)*{\text{\scriptsize{$\left|\Psi^- \right\rangle_{34}$}}}="";
(0,4)*{\text{\scriptsize{M}}}="";
(-10,-10)*{}="1";
(10,-10)*{}="2";
(-20,0)*{}="3";
(0,0)*{}="4";
(20,0)*{}="5";
(-30,10)*{\bullet}="6";
(-5,5)*{\bullet}="7";(5,5)*{\bullet}="8";
(30,10)*{\bullet}="9";
(0,-2.5)*{}="11";(0,2.5)*{}="10";
{\ar@{-}@[blue]"1";"6"};{\ar@{-}@[blue]"1";"4"};{\ar@{-}@[red]"2";"4"};{\ar@{-}@[red]"2";"9"};
{\ar@{-}@[blue]"4";"7"};{\ar@{-}@[red]"4";"8"};{\ar@{..}"11";"10"};
\endxy
\end{array}$$
\caption{(i) The spacetime diagram for entanglement swapping of a 4-photon system.  (ii) The spacetime diagram for the 4-photon system with a mirror in place of the beam splitter.  In both cases there are two spacetime events, drawn in red and blue.}
\label{entanglement swapping}
\end{figure}

\subsection{(Delayed choice) entanglement swapping}

In entanglement swapping, two pairs of entangled photons, $1,2$ and $3,4$, are produced in the state $\left| \Psi^- \right\rangle$ \cite{HBGSSZ}.
Photons $2$ and $3$ then travel to a 50/50 beam splitter, interfere, and become entangled from a Bell state measurement.
This measurement causes particles $1$ and $4$ to become entangled, and disentangles the pairs $1,2$ and $3,4$.
The wavefunction of the 4-photon system is
\begin{equation} \label{es}
\begin{array}{l}
\left| \Psi^- \right\rangle_{12} \otimes \left| \Psi^- \right\rangle_{34} = \\
\ \ \ \ \ \ \ \ \ \ \frac 12 \left( \left| \Psi^+ \right\rangle_{14} \otimes \left| \Psi^+ \right\rangle_{23} - \left| \Psi^- \right\rangle_{14} \otimes \left| \Psi^- \right\rangle_{23} \right. \\
\ \ \ \ \ \ \ \ \ \ \left. - \left| \Phi^+ \right\rangle_{14} \otimes \left| \Phi^+ \right\rangle_{23} + \left| \Phi^- \right\rangle_{14} \otimes \left| \Phi^- \right\rangle_{23} \right),
\end{array}
\end{equation}
where the state is represented in the initial physical basis on the left, and in the final physical basis on the right.
The spacetime diagram of the 4-photon system is given in Figure \ref{entanglement swapping}.ii.
Note that the support consists of two separate spacetime events; one looks like a $W$ and one looks like a $V$.

If a mirror is used instead of the beam splitter, then the pair $2,3$ undergoes a separable state measurement.
Consequently the entanglement of the original two pairs, $1,2$ and $3,4$, persists.
The spacetime diagram of this scenario is given in Figure \ref{entanglement swapping}.iii.
Note that the support again consists of two separate spacetime events, although of a different shape from entanglement swapping.

In (\ref{es}) there is no dependence on the spacetime location of the beam splitter.
Thus the choice of beam splitter or mirror could be made \textit{after} particles $1$ and $4$ have been measured.
This scenario, proposed by Peres \cite{P}, is known as \textit{delayed choice} entanglement swapping, and has been confirmed experimentally \cite{MZKUJBZ}.
The spacetime diagram of this scenario is given in Figure \ref{entanglement swapping}.iv, and is fundamentally the same as entanglement swapping without delayed choice in our spacetime framework.

\begin{Theorem} \label{main theorem}
Let $t$ be an inferred inertial time parameter.
Suppose two identical indivisible particles are unentangled for $t < 0$, and become maximally entangled at $t = 0$.
Then the union of their past ($t < 0$) inferred worldlines form a single spacetime event.
\end{Theorem}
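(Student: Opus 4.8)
The plan is to run the Bell-state argument of Figure \ref{Bell} in reverse. There, two particles created together fly apart and their two future worldlines form a single V-shaped event because the joint state is indivisible and nothing non-LI separates the two arms; here two particles approach and become entangled at $t=0$, so I want the past ($t<0$) worldlines — an inverted V — to form a single event. The two ingredients doing the work are the strong postulate of time and the identity of indiscernibles, applied to the joint two-particle system.

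First I would locate the only non-LI transformation in the history. For $t<0$ the particles are separable, while at $t=0$ the maximally entangled state is indivisible and in particular non-separable; by the monogamy of maximal entanglement the pair is a genuine single indivisible bipartite state, not entangled with anything else. The transformation at $t=0$ therefore cannot be local and invertible, since (as in the proof that spacetime events are well defined) any LI transformation $U_1\otimes U_2$ preserves separability and so can never produce entanglement. Thus $t=0$ is a bona fide non-LI boundary, whereas for $t<0$ the two free particles undergo only LI transformations.

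Next I would apply the strong postulate to the interval $t<0$. Because the joint system experiences no non-LI transformation there, it experiences no time there; equivalently, by the relational one-dimensional picture of (\ref{H psi}), formalized in Proposition \ref{one-dim}, free propagation is invisible to each indivisible state. Hence the whole interval $t<0$ is, for the system, one undivided moment rather than a continuum of distinct moments. The identity of indiscernibles then glues the two worldlines: the particles are identical, no intervening non-LI event discerns them, and relationally neither detects the other or its own spatial location, so they share all detectable properties and constitute a single indivisible state. The support of that state — the locus where it can in principle be measured — is exactly the union of the two past worldlines, which is therefore a single spacetime event.

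The hard part will be the gluing step: justifying that two worldlines manifestly distinct and spatially separated in \emph{inferred} spacetime count as one event even though the particles are \emph{unentangled} for $t<0$. The tension is that ``unentangled'' suggests two states while the conclusion asserts one event. I expect the resolution to rest entirely on the relational content of (\ref{H psi}) — spatial separation is not an intrinsic property an indivisible state can detect, so for identical, time-frozen particles it cannot serve to discern them — and to require care in arguing that identity together with the absence of an intervening non-LI transformation forces a single event rather than merely two congruent-but-distinct events. Pinning down exactly why indiscernibility upgrades to identity of the \emph{event} (and not just of the abstract state), in the face of the evident geometric separation, is the crux of the argument.
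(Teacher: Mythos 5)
Your approach diverges from the paper's, and its central step---the ``gluing''---is a genuine gap, one you yourself flag as unresolved. The problem is twofold. First, for $t<0$ the joint state $\left| - \right\rangle \otimes \left| - \right\rangle$ is separable with respect to the factorization into the two particles, so by the paper's own definition the pair is \emph{not} an indivisible state; you cannot conclude that the union of the two past worldlines is ``the support of a single indivisible state'' by appealing to indiscernibility, because that conclusion contradicts the very definition on which the notion of spacetime event is built. Second, your gluing argument makes no essential use of the hypothesis that the particles become maximally entangled at $t=0$: identity of the particles, absence of an intervening non-LI transformation, and the relational reading of (\ref{H psi}) all hold equally for two identical free particles that never interact, so your argument would prove that \emph{any} two identical unentangled particles trace out a single event. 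That proves far too much---it would render the entanglement hypothesis of Theorem \ref{main theorem} superfluous and erase the distinctions the paper draws (e.g.\ between the beam-splitter and mirror diagrams of Figure \ref{entanglement swapping}). The observation that ``no time passes for $t<0$'' is also insufficient on its own: the strong postulate prevents each particle from experiencing time under LI evolution, but experiencing no time does not merge two disjoint supports into one event.

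The paper's proof works by a different mechanism: a reduction to entanglement swapping. One checks algebraically that $\left| -- \right\rangle$ equals, up to a global phase, the state of particles 2 and 3 in $\left| \Psi^- \right\rangle_{12} \otimes \left| \Psi^- \right\rangle_{34}$ with particles 1 and 4 omitted (expand both in the $\left\{ \left| 0 \right\rangle, \left| 1 \right\rangle \right\}$ basis and compare term by term). Hence the pair $a, b$ approaching their entangling interaction is locally indistinguishable from the pair 2, 3 approaching the beam splitter; assuming the distant particles 1 and 4 cannot affect that interaction, the spacetime support of $a, b$ must coincide with that of 2, 3 in Figure \ref{entanglement swapping}.i---and there the past worldlines of 2 and 3 lie in the single (red) W-shaped event produced by the swap. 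The future entanglement at $t=0$ enters essentially through this identification, which is exactly what your argument lacks.
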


Let $p$ be the inferred event at $t = 0$ where the two particles become entangled.
It is not clear if $p$ belongs to their common past spacetime event; see Proposition \ref{overlap} below.

\begin{proof}
Suppose two particles $a$ and $b$ in the state $\left| \psi \right\rangle \in \mathcal{H}$ interfere and become maximally entangled.
Choose a 2-dimensional subspace $\mathcal{H}'$ of $\mathcal{H}$, and an orthonormal basis $\left\{ \left| 0 \right\rangle, \left| 1 \right\rangle \right\}$ of $\mathcal{H}'$, such that $\left| \psi \right\rangle = \left| - \right\rangle$.
For $t < 0$, the two states form a separable bipartite state
$$\left| - \right\rangle \otimes \left| - \right\rangle = \left| -- \right\rangle = \frac 12 \left( \left| 00 \right\rangle + \left| 11 \right\rangle - \left| 01 \right\rangle - \left| 10 \right\rangle \right).$$

Recall the initial state in entanglement swapping (\ref{es}),
$$\left| \Psi^- \right\rangle_{12} \otimes \left| \Psi^- \right\rangle_{34} = \frac 12 \left( \left| 0101 \right\rangle + \left| 1010 \right\rangle - \left| 0110 \right\rangle - \left| 1001 \right\rangle \right).$$
If particles 1 and 4 are omitted, we obtain
$$\frac 12 \left( \left| 10 \right\rangle + \left| 01 \right\rangle - \left| 11 \right\rangle - \left| 00 \right\rangle \right).$$
This is precisely the state $\left| -- \right\rangle$, up to the global phase factor $-1$.
Assuming particles $1$ and $4$ cannot effect the interaction between their respective partners $2$ and $3$, we conclude that the spacetime support of particles $a$ and $b$ is the same as the spacetime support of particles $2$ and $3$ in Figure \ref{entanglement swapping}.i (with particles $1$ and $4$ omitted). 
\end{proof}

\section{Implications for the ontology of the wavefunction} \label{last section}

Throughout we assume that there is an ontic state space $\Lambda$.
The famous PBR theorem proves that the wavefunction is an ontic state (i.e., a real physical object) under the assumption that systems which are prepared independently have independent ontic states \cite{PBR}.
We will argue that in our spacetime framework, entanglement swapping invalidates this assumption.

We first establish notation.
Let $p(\lambda | P)$ be the probability distribution that an ontic state $\lambda \in \Lambda$ arises from a given preparation $P$, and let $p(k|M,\lambda)$ be the probability distribution of outcome $k$ given a measurement $M$ on $\lambda \in \Lambda$.

Suppose preparation $P$ produces the pure quantum state $\psi_P$, and the measurement $M$ of $\psi_P$ with outcome $k$ produces the quantum state $\phi_k$. 
Harrigan and Spekkens propose that the Born rule may be reformulated using the ontic state space $\Lambda$ \cite[Definition 1]{HS}:\footnote{For a general (possibly mixed) quantum state with density matrix $\rho$ and POVM element $E_k$ associated with outcome $k$ of $M$, the left side of (\ref{int}) equals $\operatorname{tr}\left(\rho E_k \right)$.}
\begin{equation} \label{int}
p(k| M,P) = \int_{\Lambda} d\lambda \, p(k| M, \lambda) \, p( \lambda | P) = \left\langle \phi_k | \psi_P \right\rangle^2.
\end{equation}

Furthermore, they define an ontological model to be $\psi$-ontic (i.e., wavefunctions are real) if and only if for each pair of distinct quantum states $\psi_1$ and $\psi_2$ with preparations $P_1$ and $P_2$, we have \cite[Definitions 4 and 5]{HS}
\begin{equation} \label{lambda in}
p(\lambda | P_1) \, p(\lambda | P_2) = 0, \ \ \ \ \forall \lambda \in \Lambda.
\end{equation}
Otherwise the model is $\psi$-epistemic (i.e., wavefunctions merely specify our statistical knowledge of real underlying states).

Recall the argument of the PBR theorem:
Assume to the contrary that quantum theory is equivalent to a $\psi$-epistemic model with an ontic state space $\Lambda$.
Then there are distinct quantum states for which (\ref{lambda in}) does not hold (on a set $\Lambda_0 \subset \Lambda$ of nonzero measure).
For simplicity, we may take these states to be $\left| 0 \right\rangle$ and $\left| - \right\rangle$, with respective preparations $P_0$ and $P_-$.
(Here we take the second state to be $\left| - \right\rangle$ rather than $\left| + \right\rangle$, as used in \cite{PBR}, so that our analysis may be related to entanglement swapping.)

Now suppose Alice and Bob each independently prepare a particle in one of these two states.
Denote by $P_1$ the preparation of Alice, and by $P_2$ the preparation of Bob.
The initial bipartite state $\left| \psi_P \right\rangle$ with preparation $P = (P_1,P_2)$ is then one of the four separable states
\begin{equation} \label{initial state}
\left|00 \right\rangle, \ \ \left|0- \right\rangle, \ \ \left| -0 \right\rangle, \ \ \left| -- \right\rangle.
\end{equation}
An entanglement measurement $M$ is then preformed, projecting $\left| \psi_P \right\rangle$ onto one of the four states 
$$\begin{array}{lcl}
\left| \phi_1 \right\rangle & = & \frac{1}{\sqrt{2}}\left( \left| 01 \right\rangle + \left| 10 \right\rangle \right)\\
\left| \phi_2 \right\rangle & = & \frac{1}{\sqrt{2}}\left( \left| 0+ \right\rangle + \left| 1- \right\rangle \right)\\
\left| \phi_3 \right\rangle & = & \frac{1}{\sqrt{2}}\left( \left| -1 \right\rangle + \left| +0 \right\rangle \right)\\
\left| \phi_4 \right\rangle & = & \frac{1}{\sqrt{2}}\left( \left| -+ \right\rangle + \left| +- \right\rangle \right).
\end{array}$$

The preparation independence assumption implies that
\begin{equation} \label{prep ind}
p(\lambda_1, \lambda_2 | P_1,P_2) = p(\lambda_1 | P_1) \, p(\lambda_2 | P_2).
\end{equation}
Thus for each $1 \leq k \leq 4$,
$$\begin{array}{rcl}
\left\langle \phi_k | \psi_P \right\rangle^2 & \stackrel{(\textsc{i})}{=} & \int_{\Lambda \times \Lambda} d\lambda \, p(k| M,\lambda) \, p(\lambda | P) \\
& \stackrel{(\textsc{ii})}{=} & \int_{\Lambda} \int_{\Lambda} d\lambda_1 d\lambda_2 \, p(k| M, \lambda_1, \lambda_2) \, p(\lambda_1, \lambda_2 | P_1,P_2)\\
& \stackrel{(\textsc{iii})}{=} & \int_{\Lambda} \int_{\Lambda} d\lambda_1 d\lambda_2 \, p(k| M,\lambda_1, \lambda_2) \, p(\lambda_1 | P_1) \, p(\lambda_2 | P_2)\\
& \stackrel{(\textsc{iv})}{\not =} & 0,
\end{array}$$
where (\textsc{i}) holds by (\ref{int}); (\textsc{ii}) holds since $\lambda = (\lambda_1,\lambda_2) \in \Lambda \times \Lambda$; (\textsc{iii}) holds by (\ref{prep ind}); and (\textsc{iv}) holds by (\ref{lambda in}) and the assumption that $\psi_1$ and $\psi_2$ are $\psi$-epistemic. 
However, for each choice of preparation $P = (P_1, P_2)$ there is an outcome $k$ for which $\left\langle \phi_k | \psi_P \right\rangle = 0$, namely
$$\left\langle \phi_1 | 00 \right\rangle = \left\langle \phi_2 | 0- \right\rangle = \left\langle \phi_3 | -0 \right\rangle = \left\langle \phi_4 | -- \right\rangle = 0.$$
We thus arrive at a contradiction.

\begin{Proposition}
Suppose the strong postulate of time holds.
Then the preparation assumption in the PBR theorem does not hold by entanglement swapping.
\end{Proposition}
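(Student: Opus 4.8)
The plan is to exhibit a concrete physical realization of the PBR preparation for which the independence assumption (\ref{prep ind}) fails, by identifying Alice's and Bob's particles with the interfering photons $2$ and $3$ of entanglement swapping. The first observation is that the entanglement measurement $M$ appearing in the PBR argument, which projects the separable input onto one of the four states $\left| \phi_1 \right\rangle, \ldots, \left| \phi_4 \right\rangle$, is nothing but a Bell-state measurement, i.e., the beam-splitter interaction at $t=0$ in Figure \ref{entanglement swapping}.i. I would then specialize to the case in which both Alice and Bob prepare $\left| - \right\rangle$, so that the joint input is the separable state $\left| -- \right\rangle$ of (\ref{initial state}). As recorded in the proof of Theorem \ref{main theorem}, this state coincides, up to a global phase, with the state of photons $2$ and $3$ in entanglement swapping once photons $1$ and $4$ are omitted. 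Hence the PBR preparation of $\left| -- \right\rangle$ is physically implemented by taking Alice's and Bob's particles to be photons $2$ and $3$ just before they interfere.

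Next I would invoke Theorem \ref{main theorem} directly. Photons $2$ and $3$ are unentangled for $t<0$ and become maximally entangled at the Bell-state measurement at $t=0$; since the strong postulate of time is assumed, the union of their past inferred worldlines forms a single spacetime event. The point to stress is that this is exactly the PBR regime: the two particles are prepared independently in the laboratory, remain separable throughout their entire past, and yet—by the geometry forced by the strong postulate—they occupy one indivisible spacetime event rather than two disjoint ones.

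It remains to translate this geometric conclusion into the failure of the preparation independence assumption. A single spacetime event is by definition the support of a single indivisible state, and so the ontic state carried by photons $2$ and $3$ is a single $\lambda$ rather than a pair $(\lambda_1,\lambda_2)$ distributed over $\Lambda \times \Lambda$. This obstructs precisely the step of the PBR argument that rewrites the integral over $\Lambda$ as an integral over $\Lambda \times \Lambda$ with $\lambda = (\lambda_1,\lambda_2)$, and with it the factorization $p(\lambda_1,\lambda_2 \mid P_1,P_2) = p(\lambda_1 \mid P_1)\,p(\lambda_2 \mid P_2)$ of (\ref{prep ind}). I expect this final step to be the main obstacle: one must argue that sharing a common spacetime event genuinely destroys the product structure of $\Lambda \times \Lambda$ underlying (\ref{prep ind}), and does not merely correlate two otherwise well-defined individual ontic states. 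This is the juncture at which the spacetime framework of the preceding sections must be made to bear on the measure-theoretic hypotheses of the ontological model, and it is where the \emph{independent} character of Alice's and Bob's laboratory preparations is shown to be insufficient to guarantee \emph{independent} ontic states.
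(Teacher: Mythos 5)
Your proposal is correct and follows essentially the same route as the paper: identify the PBR entanglement measurement with the Bell-state measurement of entanglement swapping, invoke Theorem \ref{main theorem} to conclude that the jointly prepared particles' past worldlines form a single spacetime event, and infer that the factorization (\ref{prep ind}) fails. The only differences are cosmetic --- the paper treats both identical preparations $\left|00\right\rangle$ and $\left|--\right\rangle$ rather than just $\left|--\right\rangle$ (either one suffices to invalidate the assumption), and the final inference from ``single shared event'' to ``no product structure on the ontic states,'' which you rightly flag as the delicate step, is asserted in the paper with no more detail than you provide.
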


\begin{proof} 
Suppose the initial state is $\left| 00 \right\rangle$ or $\left| -- \right\rangle$.
By Theorem \ref{main theorem}, these states are each supported on two spacetime events as shown in Figure \ref{PBR}.
Thus
$$p(\lambda_1, \lambda_2 | P_0,P_0) \not = p(\lambda_1 | P_0) \, p(\lambda_2 | P_0)$$
and
$$p(\lambda_1, \lambda_2 | P_-,P_-) \not = p(\lambda_1 | P_-) \, p(\lambda_2 | P_-).$$
Therefore the preparation independence assumption (\ref{prep ind}) does not hold.
\end{proof}

\begin{figure}
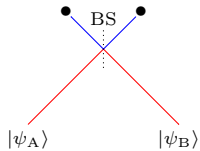

$$\xy
(-10,-12)*{\text{\scriptsize{${\left| \psi_{\operatorname{A}} \right\rangle}$}}}="";
(10,-12)*{\text{\scriptsize{${\left| \psi_{\operatorname{B}} \right\rangle}$}}}="";
(0,4)*{\text{\scriptsize{BS}}}="";
(-10,-10)*{}="1";
(10,-10)*{}="2";
(-20,0)*{}="3";
(0,0)*{}="4";
(20,0)*{}="5";
(-5,5)*{\bullet}="7";(5,5)*{\bullet}="8";
(0,-2.5)*{}="11";(0,2.5)*{}="10";
{\ar@{-}@[red]"1";"4"};{\ar@{-}@[red]"2";"4"};
{\ar@{-}@[blue]"4";"7"};{\ar@{-}@[blue]"4";"8"};{\ar@{..}"11";"10"};
\endxy$$
\caption{The spacetime diagram for the 2-photon system with inital state $\left| 00 \right\rangle$ or $\left| -- \right\rangle$ in the PBR theorem.  The independence assumption in the theorem does not hold by the spacetime diagram for entanglement swapping.  Note that there are two spacetime events, drawn in red and blue.}
\label{PBR}
\end{figure}

In the following proposition we show that the wavefunction is epistemic if spacetime events are topologically closed; or equivalently, if the wavefunction is ontic, then spacetime events are not topologically closed.
A primary difficulty with an epistemic interpretation of the wavefunction is that interference effects require an alternative explanation.

\begin{Proposition} \label{overlap}
If spacetime events are closed sets, then an indivisible state is collapsed along its entire support in inferred spacetime.
In particular, its wavefunction is epistemic.
\end{Proposition}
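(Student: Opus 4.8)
The plan is to exploit the defining feature of a spacetime event in this framework---that the entire support $\Sigma$ of an indivisible state $\psi$ is a \emph{single} event, carrying no internal fundamental-time structure, since every transformation relating points of $\Sigma$ is local and invertible. First I would locate the points at which $\psi$ actually collapses: these are precisely the boundary points of $\Sigma$ where $\psi$ interacts with its environment by a non-LI transformation (the tips of the V-shaped support in Figure \ref{Bell}). By the strong postulate of time, these collapse points are where fundamental time is marked, and they lie in the closure of the measurable locus $\Sigma$.

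Next I would invoke the closedness hypothesis. If $\Sigma$ is a closed set, then it contains its boundary, and in particular it contains the collapse points. The crux is then to show that collapse at such a boundary point forces $\psi$ to be collapsed on all of $\Sigma$. Here I would use that $\Sigma$ is a single event with no internal temporal ordering: there is no meaningful sense in which one part of $\Sigma$ is ``before'' another, since no fundamental time elapses across the LI transformations relating its points. Consequently ``being collapsed'' cannot be localized to the boundary point alone; the collapse is an attribute of the whole indivisible event, and $\psi$ is collapsed along the entirety of $\Sigma$.

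For the final clause I would translate ``collapsed along its entire support'' into the Harrigan--Spekkens ontology. If at every point of its support the true (ontic) state of $\psi$ is a definite collapsed outcome rather than the superposition itself, then the wavefunction is not fixed by the ontic state: distinct superpositions can collapse to a common definite outcome with nonzero probability (for instance $\left| 0 \right\rangle$ and $\left| - \right\rangle$ both admit collapse to $\left| 0 \right\rangle$). Their ontic supports therefore overlap, so condition (\ref{lambda in}) fails and the model is $\psi$-epistemic.

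I expect the main obstacle to be the middle step---rigorously justifying that collapse at a single boundary point propagates to the whole support. This is exactly where closedness is essential: without it the collapse point need not belong to $\Sigma$, and one could conclude nothing about the state on $\Sigma$ itself. The delicate conceptual point is to argue that collapse is a property of the temporally unstructured event taken as a whole, rather than a local happening confined to the boundary, so that a connectedness (or limiting) argument along $\Sigma$ carries the definite collapsed status from the boundary inward across the whole support.
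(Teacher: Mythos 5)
Your proposal takes essentially the same route as the paper's proof: closedness forces the non-invertible collapse point to lie inside the event itself, and since the event has no internal temporal structure, being collapsed at that point means being collapsed along the entire support (the paper phrases this concretely with an inferred-time interval $\left[0,1\right]$, noting the argument fails for $\left(0,1\right)$ or $\left[0,1\right)$). The only difference is that you explicitly justify the final epistemic clause via the Harrigan--Spekkens overlap condition, a step the paper leaves implicit; this is a reasonable filling-in rather than a different approach.
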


\begin{proof}
Suppose spacetime events are closed sets.
Fix an inertial frame with inferred time parameter $t \in \mathbb{R}$.
Consider an indivisible state $\psi(t) \in \mathcal{H}$ that is prepared at $t = 0$, evolves by a local unitary transformation $U(t) \in \operatorname{GL}(\mathcal{H})$, 
$$\psi(t) = U(t) \psi(0),$$
and is measured at $t = 1$,
$$\psi(1) = \sum_i \psi_i(1) \mapsto \psi_j(1).$$
Let $X$ be inferred spacetime, and denote by $p \in X$ the support of $\psi(t)$.
Since $p$ is an event and events are closed, the inferred time interval of $p$ is $\left[0, 1 \right]$.

In standard quantum mechanics, $\psi$ collapses at $t = 1$.
However, we may assume $\psi$ collapses at any inferred time $t_c$ in the interval $\left[0, 1 \right]$,
$$\psi(t_c) \mapsto \psi_j(t_c).$$
But then $\psi$ is collapsed at the event $p$ since measurement is a non-invertible transformation.
In particular, $\psi$ is collapsed along its entire support in $X$.
(Note that this conclusion does not hold if $\psi$ is assumed to collapse at $t = 1$ and the inferred time interval of $p$ is either the open interval $\left(0,1 \right)$ or half-open interval $\left[0,1 \right)$.)
\end{proof}

\begin{Remark} \rm{
Proposition \ref{overlap} implies that if spacetime events are closed, then quantum superposition is merely epistemic, and is a natural consequence of the fact that events may not be zero-dimensional. 
One may be tempted to say that, in this framework, $\psi$ collapses at the moment $\psi$ is prepared, rather than at the moment $\psi$ is measured. 
However, this view is misleading.
Indeed, the `moment of preparation' \textit{is} the `moment of measurement' for $\psi$.
These moments are not distinct in $\psi$'s frame, but only in the frame of the experimenter.
This may be partly why quantum mechanics appears so strange: the experimenter and the state she is measuring disagree on what a moment of time is. 
}\end{Remark}

\section{Conclusion}

The theory we propose suggests that many of the strange properties of quantum theory arise from the geometry of spacetime itself, namely, that there are positive dimensional events.
This, in turn, is a simple consequence of the identity of indiscernibles applied to time.
The main directions for future research are: (1) understanding the spacetime support of non-maximally entangled states with well-defined positions; (2) understanding the spacetime support of indivisible states with uncertain position (here we expect interference and quantum field theory to enter); (3) a mathematical theory of differential geometry where manifold-like objects may have positive dimensional points; and (4) cosmological consequences of the (strong) postulate of time.

\begin{acknowledgements}
 We thank Vinesh Solanki and Sam Pallister for useful discussions.
\end{acknowledgements}


\end{document}